\definecolor{Green}{rgb}{0.5,1,0.5}
\definecolor{Red}{rgb}{1,0.5,0.5}
\newcommand\vtheta{{\bm \theta}}
\begin{document}
\title{\emph{k}-band: self-supervised MRI reconstruction trained only on limited-resolution data \\ 
}

\author[1]{Frederic Wang}{\href{0009-0000-1904-6242}{\includegraphics{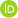}}}

\author[2]{Han Qi}{\href{0000-0002-3642-3596}{\includegraphics{figures/Orcidlogo.pdf}}}

\author[1]{Alfredo De Goyeneche}{\href{0000-0002-1550-280X}{\includegraphics{figures/Orcidlogo.pdf}}}

\author[3]{Reinhard Heckel}{\href{0000-0002-2874-2984}{\includegraphics{figures/Orcidlogo.pdf}}}

\author[1]{Michael Lustig}{}

\author[4]{Efrat Shimron}{\href{0000-0002-2267-0561}{\includegraphics{figures/Orcidlogo.pdf}}}

\authormark{Wang \textsc{et al}}

\address[1]{\orgdiv{Department of Electrical Engineering and Computer Sciences}, \orgname{University of California, Berkeley}, \orgaddress{\state{California}, \country{USA}}}

\address[2]{\orgdiv{Department of Computer Science}, \orgname{Harvard University}, \orgaddress{\state{Massachusetts}, \country{USA}}}

\address[3]{\orgdiv{Department of Computer Engineering}, \orgname{Technical University of Munich}, \orgaddress{\country{Germany}}}

\address[4]{\orgdiv{Department of Electrical and Computer Engineering and Department of Biomedical Engineering}, \orgname{Technion}, \orgaddress{\country{Israel}}}

\corres{Efrat Shimron, \email{efrat.s@technion.ac.il}}

\finfo{The authors would like to acknowledge funding from grants \fundingNumber{U24EB029240}, \fundingNumber{U01EB029427}, \fundingNumber{R01EB009690}, and \fundingAgency{GE healthcare}. E.S. acknowledges funding from the \fundingAgency{Weizmann Institute Women’s Postdoctoral Career Development Award in Science.}} 

\abstract[Summary]{
\section{Purpose} Although deep learning (DL) offers powerful tools for image reconstruction from undersampled MRI measurements, a major hurdle is the need for high-quality training data, which are difficult to acquire due to practical constraints. To address this, we introduce the \emph{k}-band framework, which enables training DL models using only partial, limited-resolution data, which can be acquired easily. Moreover, {k}-band offers test-time generalization to high-resolution reconstructions. 
\section{Methods} The \emph{k}-band framework designs the acquisition and training in-tandem. We propose the acquisition of \emph{k}-space bands, with limited resolution in the phase-encoding dimension; this is fast, practical, and easy-to-implement. To enable training using only these \emph{k}-space bands, we also propose an optimization method dubbed \emph{stochastic gradient descent over k-space subsets}. We prove analytically that this method stochastically approximates fully supervised training when the bands' angles are randomized across subjects and a \emph{k}-space loss-weighting mask is applied. 
\section{Results} Experiments with raw MRI data demonstrate that \emph{k}-band achieves performance on-par with that of fully supervised and self-supervised methods trained on high-resolution data, with the benefit of training on limited-resolution data. 
\section{Conclusion} The proposed \emph{k}-band framework offers practical strategies for fast acquisition and self-supervised training using limited-resolution data, with theoretical guarantees. It is easy-to-implement and agnostic to the pulse sequence and DL architecture. It can thus facilitate curation of new datasets and development of DL models for data-challenging regimes.
}

\keywords{Image reconstruction, self-supervised, unrolled networks, limited resolution. \vspace{-0.5cm}}


\maketitle

\section{Introduction}
\label{sec:introduction}

Although MRI offers superb image quality, its clinical application is restricted by its long acquisition time, which limits the spatio-temporal resolution and increases sensitivity to motion artifacts. Extensive research has focused on accelerating MRI, primarily through \emph{k}-space (Fourier-domain) undersampling and image reconstruction methods that infer the missing information. Well-established approaches include Parallel Imaging (PI) with multi-coil arrays \cite{pruessmann1999sense,Deshmane2015parallel,griswold2002generalized,lustig2010spirit,shimron2019core}, 
and Compressed Sensing (CS), which involves random undersampling and transform-domain regularization \cite{lustig2007sparse,vasanawala2011practical,feng2016xd,feng2017compressed,shimron2020temporal}. The latter can be learned using deep-learning (DL) frameworks, which have attracted substantial attention due to their ability to learn image priors in a data-driven manner \cite{lundervold2019overview,hammernik2022physics}. 

Nevertheless, a critical challenge that hinders the development of DL methods is the need for high-quality training data \cite{bell2023sharing}. Open-access databases of raw, high-quality MRI measurements are available but scarce, and offer only limited types of data \cite{knoll2020fastmri,ong2018mridata,calgary,desai2021skm}. Other databases offer non-raw MR images, but these are often preprocessed; training DL reconstruction models on them could lead to overly optimistic results and biased algorithmic evaluation \cite{shimron2022implicit}. Researchers thus often need to create new training databases from scratch, a difficult task due to the MRI slow acquisition rate. These challenges are particularly substantial in dynamic MRI (such as pulmonary \cite{jiang2018motion}, flow \cite{markl20124d}, and dynamic contrast-enhanced imaging \cite{zhang2015fast,ong2020extreme}), where it is often impractical to acquire full \emph{k}-space data and there is an inherent tradeoff between the spatial and temporal resolutions \cite{hammernik2022physics,nayak2022real}. 
 
Due to these challenges, there is a growing interest in self-supervised reconstruction, where networks are trained using only partial \emph{k}-space data \cite{akccakaya2022unsupervised,hammernik2022physics,chen2022ai}. 
Recent studies include methods based on the Noise2Noise\cite{lehtinen2018noise2noise} framework. One example is the 
 Self-Supervised learning via Data Undersampling (SSDU) \cite{yaman2020self_mrm} method, which has been analyzed analytically \cite{millard2023theoretical}. Other approaches include zero-shot learning  \cite{korkmaz2022unsupervised,yaman2021zero}  and training diffusion models on corrupted data\cite{daras2024ambient}.

Here we introduce the \emph{k}-band framework\footnote{This work is an extension of our conference abstracts \cite{sedona_kband} \cite{ismrm_kband}}, which aims to tackle current barriers by synergistically co-designing the data acquisition and training processes: (i) Acquisition: we propose to acquire \emph{k}-space bands, which have limited resolution in one phase-encoding (PE) dimension, and high resolution in the other dimensions. This approach is fast and easy to implement on standard MRI scanners. (ii) To further facilitate easier data acquisition, we propose to acquire only one \emph{k}-space band from each subject. Thus, training data can be collected rapidly, e.g. during dead time between clinical scans. (iii) Training: To enable self-supervised training using only the partial limited-resolution data, we introduce a novel training method dubbed \emph{stochastic gradient descent (SGD) over k-space subsets}. We derive this method analytically and show that it stochastically approximates fully supervised training (i.e., training with fully sampled high-resolution data) when two simple conditions are met: the bands' angles are randomized across subjects and a unique loss-weighting mask, derived analytically, is applied during training. (iv) Test-time generalization: we demonstrate, both analytically and via numerical simulations, that although training is done solely on limited-resolution data, during inference the network generalizes to high-resolution reconstructions.

\begin{figure*}[htp]
    \centering
    \includegraphics[width = \textwidth]{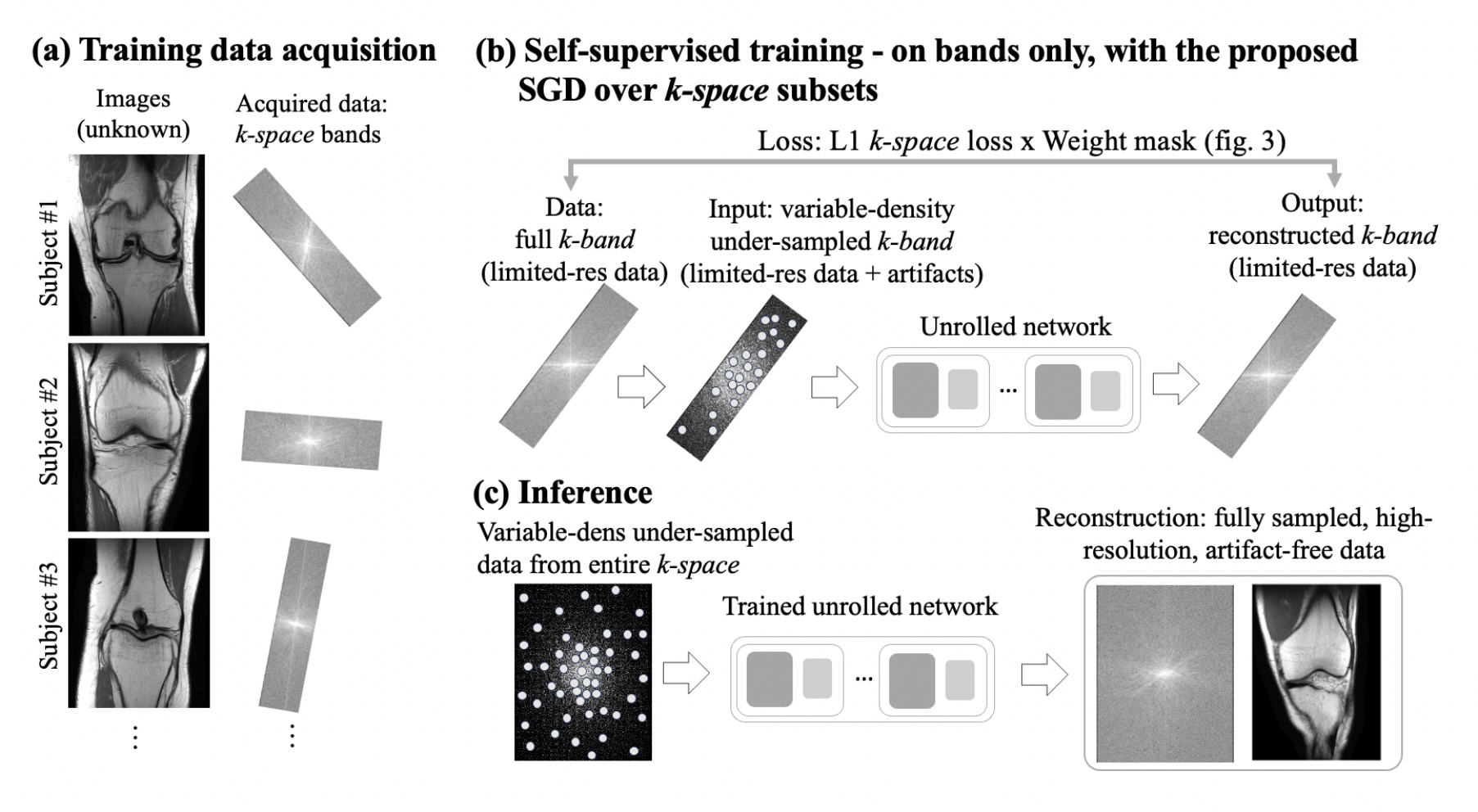}
    \caption{The proposed \emph{k}-band strategy. (a) Training data consists of \emph{k}-space bands, with different orientations across training examples. (b) During training, the band is undersampled with a variable-density mask and the network learns to reconstruct images in a self-supervised manner. (c) During inference, the network is not limited to limited-resolution data. It receives variable-density undersampled data from the entire \emph{k}-space, and reconstructs high-resolution images even though it never saw such examples during training.}
    \label{fig:flowchart}
\end{figure*}

\subsection{Background: The MRI inverse problem}

This work addresses image reconstruction from sub-Nyquist sampled \emph{k}-space measurements. The aim is to reconstruct an image $\vx$ from noisy, undersampled measurements $\vy=\mM \mF \vx + \ve$, where $\mF$ is an operator that describes the imaging system (the Fourier transform), $\mM$ is an operator that chooses the locations of the sampled \emph{k}-space pixels, and $\ve$ is noise. For multi-coil data, the $\mF$ operator will also include the coils' sensitivity maps.

A common approach to reconstruct the image from the measurements is to solve the regularized least-squares problem
\begin{align}
\label{eq:rls}
\underset{\vx}{\operatorname{argmin}} \|\mM \mF \vx - \vy\|_2^2 + R(\vx)
\end{align}

where $R(\vx)$ is a regularizer, such as a sparsity-promoting $\ell_1$ penalty~\cite{lustig2007sparse}. In DL frameworks, the reconstruction problem is commonly solved by training a neural network $f_\vtheta$ with parameters $\vtheta$ to map the measurements $\vy$ to a clean image. The network is typically trained in a supervised fashion, by minimizing a loss such as
\begin{align}
\label{eq:supervisedloss}
L(\vtheta) = \frac{1}{N} \sum_{i=1}^N \|\vx^{(i)} - f_\vtheta(\vy^{(i)}) \|_2^2
\end{align}

The training process hence requires pairs of measurements and target images.

The network architecture has substantial influence on the overall performance. At present, unrolled neural networks give strong performance \cite{hammernik2022physics}. This architecture can be obtained, for example, by unrolling the regularized least-squares objective (eq. ~\eqref{eq:rls}) with a first-order optimization algorithm:
\begin{align}
\vx_{k+1} = \vx_k - \alpha_k ( \frac{1}{2} \vy^T (\mM\mF \vx_k - \vy) + \nabla R(\vx_k))
\end{align}
Here, $\alpha_k$ is the step size (learning rate). The gradient of the regularizer (i.e., $\nabla R(\vx_k)$) is parameterized with a neural network (for example a U-net), which yields an unrolled neural network $f_\vtheta$.

\subsection{Overview of the \emph{k}-band framework}

\subsubsection{Data acquisition strategy}
\label{sec_data_acq_overview}

We propose to collect training data by acquiring \emph{k}-space bands, which have limited resolution in one direction. In 2D scans, this means acquiring data with high resolution along the readout dimension and limited resolution along the phase-encoding (PE) dimension (Figure \ref{fig:flowchart}a). We note that the proposed framework is also compatible with 3D imaging. For simplicity, here we describe the main concepts for a 2D acquisition, and the 3D setting will be discussed in section \ref{sec_acq_sampling}. 

The proposed bands acquisition scheme has many advantages. First, it offers speed: as only a portion of \emph{k}-space is sampled, bands can be acquired rapidly. This may allow for increasing the temporal resolution in dynamic scans or for faster volume coverage in high-dimensional scans. Secondly, bands can be acquired easily using any commercial MRI scanner, with various pulse sequences or image contrasts, by changing only two standard parameters: the PE resolution and the scan orientation. Therefore, this acquisition can be used to fill dead-time between scans in an MRI exam session and hence be easily integrated in a clinical workflow to obtain new datasets. Third, because the data within each band are fully sampled, this acquisition yields images that are \emph{free of global streaking or aliasing artifacts}. Such images may hence be more suitable for training DL models compared with images obtained from scattered sampling, e.g. variable-density (VD) sampling.

\subsubsection{Training strategy}

\emph{\textbf{SGD over \emph{k}-space subsets: key concept}}

The general, well-established SGD algorithm minimizes the fully supervised loss (eq. ~\eqref{eq:supervisedloss}). In each iteration one example $(\vx^{(i)}, \vy^{(i)})$ (or a mini-batch of examples) is chosen stochastically from the training set, and the gradient $\nabla L (\vx^{(i)}, f_{\theta} (\vy^{(i)}))$ is computed using \emph{fully sampled} \emph{k}-space data.

To enable training without full \emph{k}-space data, we propose to break down the SGD process into smaller steps. In each step we approximate the gradient using only a portion of the \emph{k}-space data. Notably, this framework is general and can be implemented with different acquisition schemes. Here, we explore one possible implementation involving bands acquisition, which has many benefits (section \ref{sec_data_acq_overview}).

\emph{\textbf{Training pipeline.}}

For each training example, we \emph{prospectively} acquire a band of \emph{k}-space data and then undersample it \emph{retrospectively} by applying a variable-density (VD) sampling mask. The network receives the VD-undersampled band as input, which has limited resolution \emph{and} undersampling artifacts. The training target is the fully sampled band (Figure. \ref{fig:flowchart}b), i.e. the entire \emph{k}-space data inside the band, hence the network only learns to remove the undersampling artifacts.

We compute the loss in \emph{k}-space, only inside the bands. This means that \emph{all} the data within the band is used for supervision, and there is \emph{no supervision} outside the band. We chose to avoid enforcing zeros outside the band, as this allows the network to learn priors for different \emph{k}-space regions. The network only "sees" one portion of \emph{k}-space for each training example. However, as band orientations change across examples, allowing for most of the \emph{k}-space to be covered across the training iterations, the network can generalize to high-resolution data.

\subsubsection{Inference}
Unlike the training data, which consist of bands, the \emph{test data} are acquired using a VD undersampling mask that covers the entire \emph{k}-space (Figure \ref{fig:flowchart}c). The VD statistics of the test data are the same as those of the VD undersampling applied during training. Inference is done using the pre-trained network without any re-training. This work aims to show that networks trained using only limited-resolution examples, i.e. \emph{k}-space bands, can generalize to high-resolution reconstructions at inference.


\subsection{Analytical derivation }

\subsubsection{Conditions for approximating supervised training}\label{proof1}

Here we derive the conditions under which \emph{k}-band stochastically approximates fully supervised SGD, i.e. approximates training with fully sampled \emph{k}-space data. Supervised training with a loss $L \colon \mathbb{C} \to \mathbb{R}$ applied in the Fourier domain minimizes an empirical version of the risk
\begin{align}
R(\vtheta) = \EX[(\vx, \vy)] {L(\mF f_\vtheta(\vy) - \mF \vx)}. 
\end{align}

Here, the expectation is over a joint distribution over the images $\vx$ and corresponding measurements $\vy$. Suppose the loss $L$ satisfies two conditions: (1) $L$ is pixelwise, i.e. $L(\vz) = \sum_j \ell([\vz]_j)$ where $[\vz]_j$ represents pixel $j$ of $\vz$. (2) If a pixel is zero, then the resulting loss is zero, i.e. $\ell(0) = 0$. Furthermore, let us define $\mU$ as a distribution of loss-supervision masks sampled independently of $(\vx,\vy)$, and a corresponding weighted loss as $L^W (\vz) = \sum_j \mW_{jj} \ell([\vz]_j)$ where $\mW$ is a deterministic loss-weighting mask. A stochastic gradient of this loss is
\begin{align}
G(\vtheta) 
= \nabla_\vtheta L^W (\mU (\mF f_\vtheta(\vy) - \mF \vx)).
\end{align}
Computing the gradient does not require knowledge of the image $\vx$, because an undersampled measurement $\mU\mF \vx$ is sufficient.

\begin{proposition}
Suppose that the matrix $\mW$ is a deterministic diagonal matrix obeying
\begin{align}
\mW_{jj} &= \frac{1}{\EX{\mU_{jj}}} 
\Longleftrightarrow \mW = (\EX{\mU})^{-1}.
\end{align}
Then 
\begin{align}
\EX[(\vx, \vy), \mU] {G(\vtheta)} = \nabla R(\vtheta).
\end{align}
\end{proposition}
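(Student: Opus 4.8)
The plan is to show that $G(\vtheta)$ is, termwise, an unbiased estimator of the per-pixel contributions to $\nabla R(\vtheta)$, and that the prescribed weights $\mW$ exactly compensate for the sampling probabilities of $\mU$. Writing $\vz = \mF f_\vtheta(\vy) - \mF \vx$ for the Fourier-domain residual, I would first expand the weighted masked loss as $L^W(\mU \vz) = \sum_j \mW_{jj}\, \ell([\mU\vz]_j) = \sum_j \mW_{jj}\, \ell(\mU_{jj}[\vz]_j)$, using that $\mU$ is a diagonal sampling mask so that $[\mU\vz]_j = \mU_{jj}[\vz]_j$.

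The crux of the argument is a linearization identity: because $\mU$ is a binary (idempotent) mask with entries in $\{0,1\}$ and $\ell(0)=0$, one has $\ell(\mU_{jj}[\vz]_j) = \mU_{jj}\,\ell([\vz]_j)$ for every $j$ --- the two sides agree both when $\mU_{jj}=0$ (both vanish, by $\ell(0)=0$) and when $\mU_{jj}=1$ (both equal $\ell([\vz]_j)$). This collapses the composition-with-$\ell$ into a scalar multiplier, so that $L^W(\mU\vz) = \sum_j \mW_{jj}\mU_{jj}\,\ell([\vz]_j)$. I expect this to be the main obstacle to state cleanly: it is the one place where the specific structure of $\mU$ (binary, data-independent) together with the normalization $\ell(0)=0$ is essential, and it is what lets me avoid wrestling with the complex (Wirtinger) derivative of $\ell$ through the mask. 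Since $\mW_{jj}$ and $\mU_{jj}$ do not depend on $\vtheta$, differentiating yields $G(\vtheta) = \sum_j \mW_{jj}\mU_{jj}\,\nabla_\vtheta \ell([\vz]_j)$, the mask surviving only as a multiplicative weight on each pixel's gradient.

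Next I would take the total expectation $\EX[\vx,\vy,\mU]{\cdot}$ and exploit that $\mU$ is drawn independently of $(\vx,\vy)$. Independence lets the joint expectation factor pixelwise as $\EX[\vx,\vy,\mU]{G(\vtheta)} = \sum_j \mW_{jj}\,\EX{\mU_{jj}}\,\EX[\vx,\vy]{\nabla_\vtheta \ell([\vz]_j)}$. Substituting the hypothesis $\mW_{jj} = 1/\EX{\mU_{jj}}$ cancels the sampling factor exactly, leaving $\sum_j \EX[\vx,\vy]{\nabla_\vtheta \ell([\vz]_j)}$.

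Finally, I would interchange gradient and expectation (justified under mild regularity / dominated-convergence assumptions on $f_\vtheta$ and $\ell$) and reassemble the sum over $j$ back into $L$, recognizing $\sum_j \EX[\vx,\vy]{\nabla_\vtheta \ell([\vz]_j)} = \nabla_\vtheta \EX[\vx,\vy]{\sum_j \ell([\vz]_j)} = \nabla_\vtheta \EX{L(\vz)} = \nabla R(\vtheta)$, which closes the argument. The only loose ends worth flagging as explicit hypotheses are the measure-theoretic interchange of $\nabla_\vtheta$ with $\EX{\cdot}$, and the assumption that $\mU$ is a $\{0,1\}$-valued diagonal mask independent of the data; both are standard, but the latter is precisely what powers the linearization identity above.
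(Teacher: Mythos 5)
Your proposal is correct and follows essentially the same route as the paper's proof: the key step in both is the linearization identity $\ell(\mU_{jj}[\vz]_j)=\mU_{jj}\,\ell([\vz]_j)$ (valid because $\mU_{jj}\in\{0,1\}$ and $\ell(0)=0$), followed by independence of $\mU$ from $(\vx,\vy)$ and cancellation of $\EX{\mU_{jj}}$ against $\mW_{jj}$. The only difference is cosmetic --- the paper pulls $\nabla_\vtheta$ outside the expectation at the outset while you differentiate first and swap at the end, and you explicitly flag the regularity needed for that interchange, which the paper leaves implicit.
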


The proposition states that \emph{SGD over k-space subsets} enables us to compute unbiased stochastic gradients of the risk. Thus, with sufficient training data, the network trained using our proposed method converges to one that is trained in a fully supervised fashion.

\begin{proof}
The expectation of the stochastic gradient is

\begin{align*}
\EX{G(\vtheta)}
&= \nabla_\vtheta \EX[(\vx,\vy),\mU]{L^W(\mU (\mF f_\vtheta(\vy) - \mF \vx)) } \\
&\stackrel{i}{=} \nabla_\vtheta \EX[(\vx,\vy),\mU]{\sum_{j} \mW_{jj} \ell(\mU_{jj} [\mF f_\vtheta(\vy) - \mF \vx]_{j})
} \\
&\stackrel{ii}{=} \nabla_\vtheta \EX[(\vx,\vy),\mU]{\sum_{j} \mW_{jj} \mU_{jj} \ell([\mF f_\vtheta(\vy) - \mF \vx]_{j})
} \\
&\stackrel{iii}{=}  \nabla_\vtheta \EX[(\vx,\vy)]{
\sum_j \mW_{jj} \EX[\mU]{\mU_{jj}}
\ell([ \mF f_\vtheta(\vy) - \mF \vx]_j ) 
} \\
&\stackrel{iv}{=} \nabla_\vtheta \EX[(\vx,\vy)]{
\sum_j 
\ell([ \mF f_\vtheta(\vy) - \mF \vx]_j ) 
} \\
&= \nabla_\vtheta R(\vtheta),
\end{align*}
as desired. $(i)$ uses the definition of the loss $L$, $(ii)$ uses the fact that $U_{jj} \in \{0, 1\}$ and $\ell(0) = 0$, $(iii)$ uses the independence of $\mU$ and $(\vx, \vy)$, and $(iv)$ uses the assumption on the masks. 

\end{proof}


\begin{corollary} \label{W_proof} Let $\mU = \{\mB_i\}$ where $\mB_i$ represents a band-acquisition mask that has orientation $i \in \{1, ..., k\}$. \emph{K}-band stochastically approximates fully supervised training with element-wise \emph{k}-space loss weighting when
\begin{align}
\label{eq:w_mask}
\mW = k (\sum_{i=1}^{k} \mB_i)^{-1}
\end{align}
\end{corollary}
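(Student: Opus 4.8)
The plan is to obtain this Corollary as a direct specialization of the Proposition, so the main task is to identify the generic stochastic mask $\mU$ with the concrete band-selection mechanism of \emph{k}-band and then compute its expectation. In the k-band scheme each training iteration draws a single example together with a band orientation, and the loss is supported exactly on that band (there is no supervision outside it). I would therefore set $\mU = \mB_I$, where the orientation index $I$ is the source of randomness. The condition that the orientation is chosen afresh for every new scan, independently of the anatomy, is precisely what guarantees that $\mU$ is sampled independently of $(\vx, \vy)$, which is the hypothesis the Proposition requires; I would flag this explicitly rather than leave it implicit.

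Next I would specify the distribution of $I$: the argument only closes if $I$ is uniform on $\{1, \ldots, 180\}$, which is the random-uniform orientation condition advertised in the abstract. Under uniformity,
\begin{align}
\EX{\mU} = \EX[I]{\mB_I} = \frac{1}{180} \sum_{i=1}^{180} \mB_i,
\end{align}
since each orientation is selected with probability $1/180$. Because every $\mB_i$ is a diagonal $0/1$ indicator, this expectation is itself diagonal, with $j$-th entry equal to the fraction of orientations whose band covers \emph{k}-space pixel $j$. Substituting into the weighting rule $\mW = (\EX{\mU})^{-1}$ supplied by the Proposition gives
\begin{align}
\mW = \left( \frac{1}{180} \sum_{i=1}^{180} \mB_i \right)^{-1} = 180 \left( \sum_{i=1}^{180} \mB_i \right)^{-1},
\end{align}
which is exactly \eqref{eq:w_mask}. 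The Proposition then immediately yields $\EX{G(\vtheta)} = \nabla R(\vtheta)$, i.e. unbiased stochastic gradients, establishing the claim.

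The step I expect to require the most care — the main obstacle — is justifying that $\EX{\mU}$ is actually invertible, since the Proposition quietly presumes $(\EX{\mU})^{-1}$ exists. As $\sum_{i=1}^{180} \mB_i$ is diagonal, it is invertible if and only if every diagonal entry is nonzero, i.e. if and only if every \emph{k}-space pixel is covered by at least one band orientation. This is exactly the full-coverage hypothesis from the abstract: a pixel never touched by any band would force a division by zero in $\mW$ and, correspondingly, would receive no supervision and no unbiasedness guarantee. I would therefore state the coverage assumption explicitly and remark that the $180$ one-degree rotations are chosen precisely so that the union of bands tiles all of \emph{k}-space, ensuring strict positivity of the diagonal and hence a well-defined $\mW$. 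A secondary point worth one sentence is that the weighting is element-wise (diagonal), so the inverse is computed entrywise and raises no difficulty beyond this positivity requirement.
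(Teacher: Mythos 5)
Your proposal is correct and follows essentially the same route as the paper: identify the stochastic mask $\mU$ with the uniformly drawn band mask $\mB_I$, compute $\EX{\mU} = \frac{1}{180}\sum_{i=1}^{180}\mB_i$, and invert via the Proposition to obtain $\mW = 180(\sum_{i=1}^{180}\mB_i)^{-1}$. Your explicit attention to the invertibility of $\EX{\mU}$ (i.e., full \emph{k}-space coverage by the union of bands) is a worthwhile addition that the paper's proof leaves implicit, but it does not change the argument.
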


\begin{proof}
We can express the \emph{k}-band training gradient using the stochastic gradient formulation. Suppose we acquire \emph{k}-space with band mask $\mB_i$, and suppose $\mV$ is an arbitrary VD undersampling mask. Then, the \emph{k}-band training loss can be written as 
\begin{align}
L^W (\mB_i (\mF f_\vtheta(\mB_i \mV \mF \vx) - \mF \vx))
\end{align}

By Proposition II.1, we can approximate fully supervised training by weighting the loss element-wise in the Fourier domain with 
\begin{align}
\mW = (\EX{\mU})^{-1} = (\EX{\mB_i})^{-1} = k (\sum_{i=1}^{k} \mB_i)^{-1}
\end{align}
where the last step is because each band mask $\mB_i$ is sampled uniformly from all possible band orientations.
\end{proof}


\subsubsection{Analysis of $\ell_1$ and $\ell_2$ loss functions}

\begin{corollary}
Both $\ell_1$ and $\ell_2$ loss functions (in \emph{k}-space) satisfy the conditions for $L$. Using our definition of $L(\vz) = \sum_j \ell(\vz_j)$, an $\ell_1$ loss can be defined with $\ell(x) = |x|$ which satisfies $\ell(0) = 0$. Additionally, an $\ell_2$ loss can be defined with $\ell(x) = \|x\|_2^2$, which satisfies $\ell(0) = 0$.
\end{corollary}

{\bf Remarks:}
Suppose we use the $\ell_2$ loss, i.e.,  $L(\vz) = \norm[2]{\vz}^2$. Then by Parseval's Theorem, we have that training in the Fourier domain is equivalent to training in the image domain, as follows:
\begin{align*}
R(\vtheta) 
= \EX{\norm[2]{\mF f_\vtheta(\vy) - \mF \vx}^2 }
= \EX{\norm[2]{ f_\vtheta(\vy) - \vx}^2 }.
\end{align*}

\begin{proposition}
Suppose that $\|\nabla_\theta f_\vtheta(\vy)\|^2_2 \leq M$, i.e. the network is $M-$Lipschitz, and suppose that $| \nabla_x l(x) | \leq D$, i.e. the loss function is $D-$Lipschitz. Let $C = \max_j (\mW_{jj})$. We have $$\EX [\vx, \vy, \mU] {\|G(\theta)\|^2} \leq  C D^2 M$$

It is important to choose stochastic gradients that have a small variance, as they result in stochastic gradients converging faster and thus requiring fewer training examples\cite{Klug_Atik_Heckel_2023}. This proposition allows us to quantify the variance of these stochastic gradients as a function of the choice of loss. Note that $C$ and $M$ are  agnostic to the loss function. 

\end{proposition}
\begin{proof}
Let $\vz = \mF f_\vtheta(\vy) - \mF \vx$ for simplicity. We have
\begin{align*}
&\EX [(\vx, \vy), \mU] {\|G(\theta)\|^2} = \EX [(\vx, \vy), \mU] {\|\nabla_\theta L^W (\mU \vz)\|^2} \\ 
&= \EX [(\vx, \vy), \mU] {( \sum_i \mW_{ii} \mU_{ii} \nabla_\theta \ell ([ \vz ]_j) )^2 } \\
&= \EX [(\vx, \vy), \mU] {\sum_i \sum_j \mW_{ii} \mU_{ii} \mW_{jj} \mU_{jj} \nabla_\theta \ell ([ \vz ]_i) \nabla_\theta \ell ([ \vz ]_j) } \\
&\stackrel{i}{=} \EX [(\vx, \vy), \mU] {\sum_i \mW_{ii}^2 \mU_{ii}^2 (\nabla_\theta \ell ([ \vz ]_i))^2 } \\
&\stackrel{ii}{=} \EX [(\vx, \vy)] {\sum_i \mW_{ii} (\nabla_\theta \ell ([ \vz ]_i))^2 }
\\ &= \EX [(\vx, \vy)] { \sum_i \mW_{ii} (\nabla_\theta \ell ([\mF f_\vtheta(\vy) - \mF \vx]_i))^2} \\
&\stackrel{iii}{=} \EX [(\vx, \vy)] { \sum_i \mW_{ii} ( \ell'([\mF f_\vtheta(\vy) - \mF \vx]_i) [\mF \nabla_\theta f_\vtheta(\vy)]_i )^2 } \\
&\stackrel{iv}{\leq} D^2 \EX [(\vx, \vy)] { \sum_i \mW_{ii} [\mF \nabla_\theta f_\vtheta(\vy)]_i ^2 } \\
&\stackrel{v}{\leq} C D^2 \EX [(\vx, \vy)] { \|\mF \nabla_\theta  f_\vtheta(\vy) \|_2^2 } \\
&\stackrel{vi}{=} C D^2 \EX [(\vx, \vy)] { \|\nabla_\theta  f_\vtheta(\vy) \|_2^2 } \\
&\stackrel{vii}{\leq} C D^2 M
\end{align*}
as desired. Equation $(i)$ comes from the fact that $\EX{\mW_{ii} \mU_{ii}} =\EX{\mW_{jj} \mU_{jj}} = 1$ and that those terms are independent for $i \neq j$, leaving only the diagonal squared terms after simplification. Equation $(ii)$ uses the fact that $\mU_{ii} \in \{0, 1\}$, so $\EX{\mU_{ii}^2} = 0^2 \mathbb{P} (\mU_{ii} = 0) + 1^2 \mathbb{P} (\mU_{ii} = 1) = \EX{\mU_{ii}} = \mW_{ii}^{-1} = \EX{\mW_{ii}^{-1}}$. Equation $(iii)$ uses the chain rule, equation $(iv)$ uses our assumption on the Lipschitz condition on the loss, equation $(v)$ comes from our assumption on the maximum value of $\mW$, equation $(vi)$ comes from Parseval's Theorem, and equation $(vii)$ comes from our assumption of the Lipschitz condition of the neural network.
\end{proof}


\begin{corollary} \label{L1better} By using a loss function with a bounded gradient, there will be smaller variance in the gradients. Under this regime, an $\ell_1$ \emph{k}-space loss ($D = |\nabla_x \ell_1(x)| = 1$) is expected to perform better than an $\ell_2$ \emph{k}-space loss ($D = |\nabla_x \ell_2(x)| = |x|$).
\end{corollary}

\subsubsection{Summary}

We showed that the proposed method stochastically approximates training with SGD on fully sampled high-resolution data when two simple conditions are met: (i) the limited-resolution axis is chosen randomly and uniformly for every new scan, hence \emph{k}-space is fully covered across the entire training set; (ii) The \emph{k}-space loss function is multiplied by the deterministic weighting mask $\mW$ (eq. (9)), which facilitates accurate learning of high-resolution details. We also found that training \emph{k}-band using an $\ell_1$ \emph{k}-space loss better approximates fully supervised training than training with an $\ell_2$ \emph{k}-space loss.

\section{Material and methods}

\subsection{Training: optimization and loss} 

\textbf{\emph{Unrolled network}}. We demonstrate the framework with an unrolled network, similarly to MoDL \cite{aggarwal2018modl}. However, we use self-supervised training on bands and a weighted \emph{k}-space loss.

Our unrolled optimization scheme is formulated by
\begin{align}
\vx_{n+1} &= (\mF^H \mB^H \mB \mF + \eta \mI)^{-1} ((\mB \mF)^H \vy + \eta \vz_n) \\
\vz_n &= \text{ResNet}_\theta (\vx_n)
\end{align}
where $H$ is the Hermitian transpose, $\eta$ is a regularization parameter,  and $\mB \in \{\mB_i\}_{i=1,...,k}$ is a band sampling operator (see Corollary \ref{W_proof}). $\text{ResNet}_\theta$ describes the network contained within one block (one iterate) of the unrolled network $f_\theta$. The network is trained end-to-end, and the loss is computed using the output of the last network layer.

\textbf{\emph{Loss function}}. We use a weighted $\ell_1$ \emph{k}-space loss, as it performs better than an $\ell_2$ \emph{k}-space loss both in theory (Corollary \ref{L1better}) and empirically (Table \ref{table:table1_loss}). The weighted $\ell_1$ \emph{k}-space loss is formulated by, 
\begin{align}
L(\vtheta) = \frac{1}{N} \sum_{i=1}^{N} \|\mW \mB_i (\mF f_{\theta}(\vy^{(i)}
) - \mF \vx^{(i)} ) \|_1
\end{align}
where $\vx^{(i)}$ are the band-limited \emph{k}-space data of sample $i$, $\vy^{(i)}$ are the corresponding VD-undersampling input, $\mW$ is the loss weighting derived above (see eq. \eqref{eq:w_mask}) and $\mB_i$ is the binary band mask.

\textbf{\emph{Loss-weighting mask}}. 
The loss-weighting mask $\mW$ has a deterministic formulation (eq. (8), which depends on two acquisition parameters: the band width and the number of distinct bands $k$.

For illustration, we display $\mW$ masks (Figure \ref{fig:W_mask}) for a band area that covers 12.5\% of \emph{k}-space and (a) an extreme case, where only 10 band orientations were acquired, (b) a case where 30 orientations were acquired, and (c) a case where 180 orientations were acquired. Figure \ref{fig:W_mask}d shows a profile of case (c). For all scenarios, $\mW$ obtains low values in the center of \emph{k}-space, and high values in its periphery. Intuitively, this loss-weighing compensates for over-exposure of the network to low-frequency
\emph{k}-space data and enhances learning in the peripheral areas.

\begin{figure}[h]
   \centering
   \includegraphics[width = 0.45\textwidth]{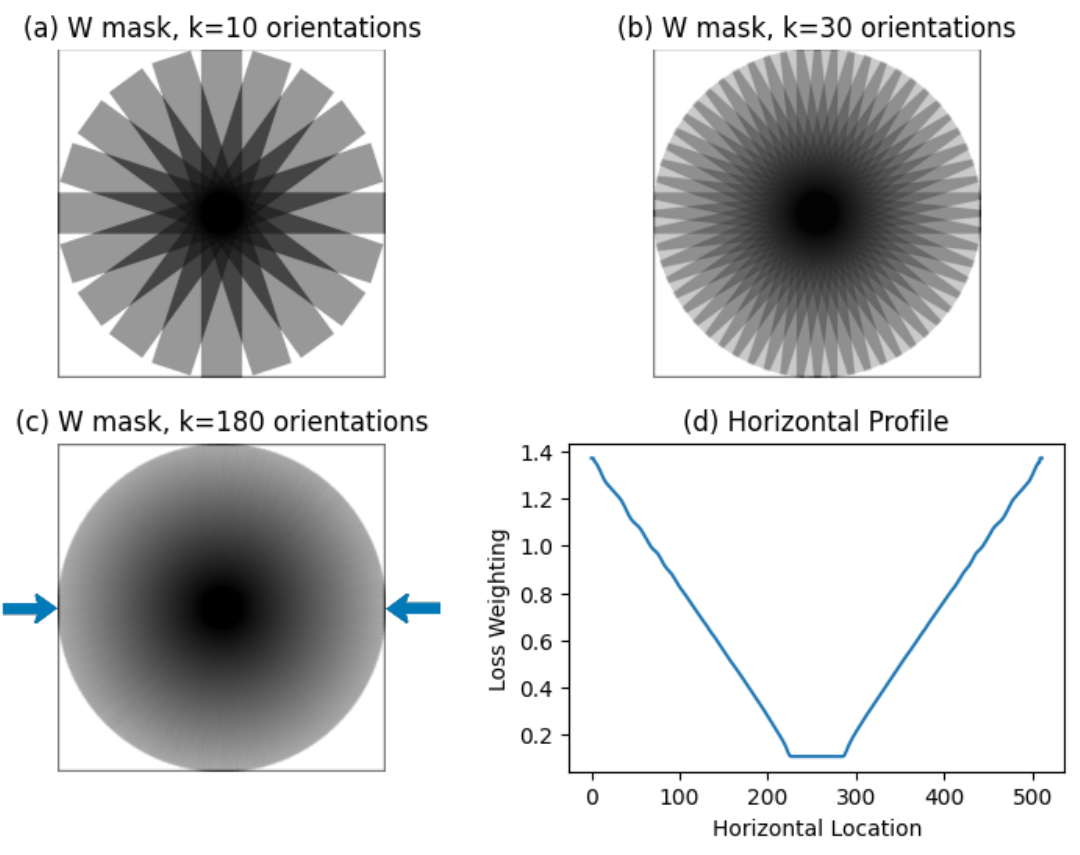}
   \caption{Examples of loss-weighting masks ($\mW$), where the number of distinct bands was (a) $k=10$, (b) $k=30$, (c) $k=180$. (d) Horizontal profile of the $\mW$ mask shown in (c), along the line between the two arrows. Notice that the loss-weighting mask inhibits the loss values in the \emph{k}-space center and enhances them in the \emph{k}-space periphery. Therefore, this mask facilitates equal learning across the entire \emph{k}-space.} 
   \label{fig:W_mask}
\end{figure}


\subsection{Acquisition and sampling}
\label{sec_acq_sampling}

We emphasize again that the training and test data are acquired differently. For training, whole bands are acquired \emph{prospectively}, and the VD sampling inside the bands area is applied \emph{retrospectively}, to train the network. For inference the VD undersampling is applied \emph{prospectively}. The \emph{k}-band framework is compatible with both 2D and 3D MRI scans (Figure \ref{fig3:band_mask}).

\emph{\textbf{2D MRI.}} The 2D setting is illustrated in the middle row of Figure \ref{fig3:band_mask}. For \emph{training}, we acquire bands by limiting the resolution along the PE dimension, and keeping the readout fully sampled. Then, we apply VD undersampling retrospectively along the PE dimension only; this is referred to as \emph{1D undersampling}.  For \emph{inference}, data are acquired with VD undersampling along the PE dimension, and the VD mask covers the entire k-space.

\emph{\textbf{3D MRI.}} The 3D case is illustrated in the bottom row of Figure \ref{fig3:band_mask}.  For \emph{training}, bands are acquired by limiting the resolution only along one PE dimension, while the readout and the second PE dimension are fully sampled. For \emph{inference}, however, data are acquired by undersampling both of the PE dimensions. This results in a 2D VD mask, hence we refer to this as \emph{2D undersampling}.

\begin{figure}[h]
    \centering
    \includegraphics[width = 0.49\textwidth]{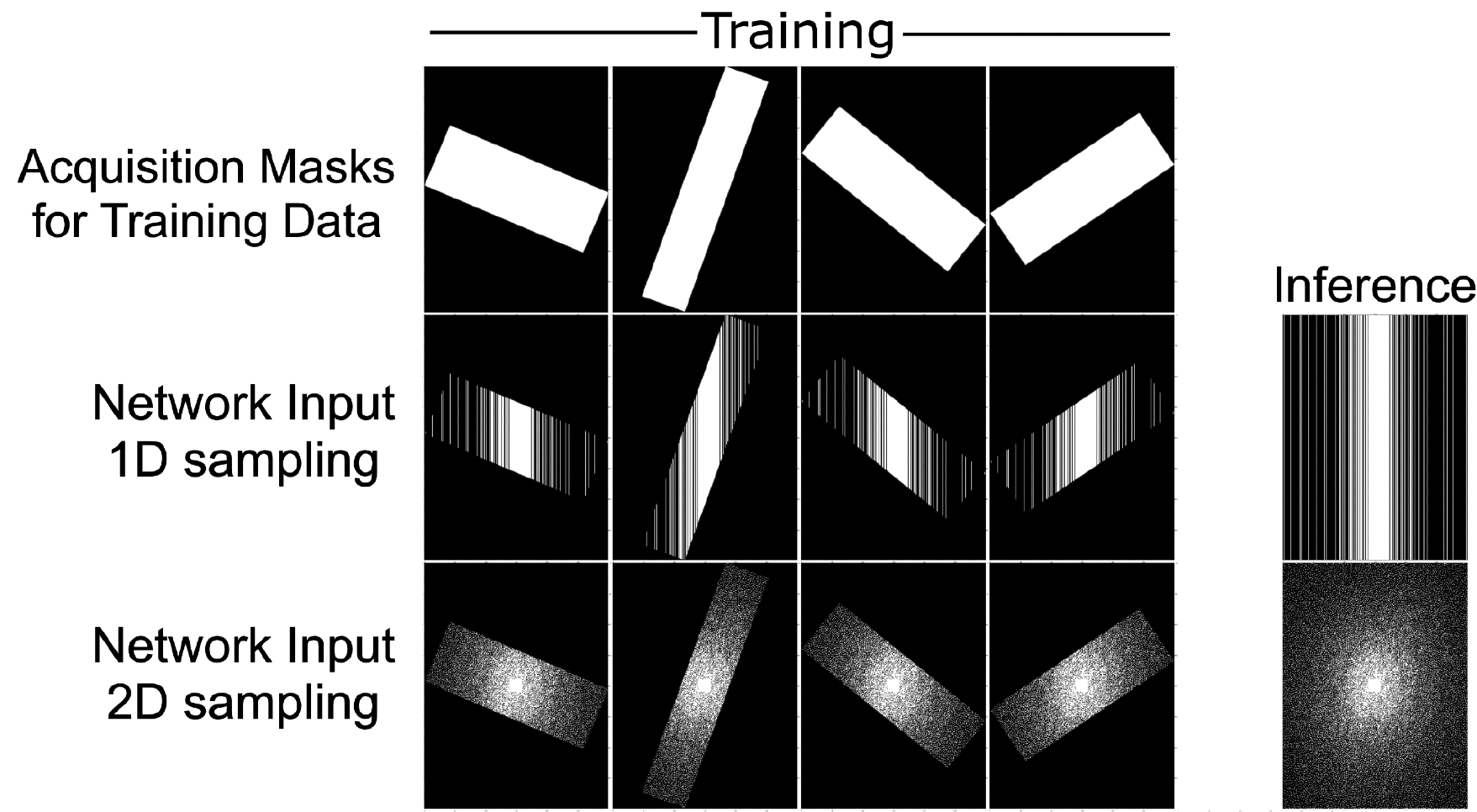}
    \caption{(Top) Visualization of prospective sampling masks for training data acquisition and loss supervision. (Middle) Retrospective 1D Poisson disc undersampling mask used for input to the network during training and inference. (Bottom) Retrospective 2D Poisson disc undersampling mask used for input to the network during training and inference.}
    \label{fig3:band_mask}
\end{figure}

\emph{\textbf{Terminology.}} We use $R_{band}$ to designate the scan time acceleration achieved by acquiring \emph{k}-space bands rather than the full \emph{k}-space. In other words, $R_{band}$ is the ratio of the full \emph{k}-space size over the band size. We use $R_{vd}$ to designate the acceleration obtained by the VD undersampling. It is worth noting that during training the VD sampling is retrospectively applied only to the bands data, and during inference it is applied to the entire \emph{k}-space.  

\subsection{Implementation Details}

Here we describe the implementation details for our experiments, including the data processing pipeline, sampling methods, and network architecture.

\subsubsection{Data preparation}

To avoid conducting any data crimes \cite{shimron2022implicit}, we provide a full description of the data preparation pipeline. We obtained raw multi-coil knee and brain data from fastMRI \cite{knoll2020fastmri}. Specifically, we used proton-density fat-saturated knee data, and T1-weighted, T2-weighted and FLAIR brain data. Each scan contains 20-30 slices. We used only the 10 middle slices from each scan (where structure is abundant), and normalized the intensity of each slice  to its 95th percentile. Each dataset included 1600 training examples and 400 test images.

When trying to train a network with several unrolls on our GPUs, we ran into computational issues related to the large size of the fastMRI raw data, e.g. 640x372x15 (height x width x coils) pixels. To mitigate these issues, we reduced the data size by applying coil-combination and external image-domain cropping. However, we emphasize that our method is not limited to coil-combined or single-coil data.

We computed the coil sensitivity maps using BART's implementation of the ESPIRiT algorithm \cite{uecker2014espirit,uecker2015berkeley}, with a calibration region of 20x20 (21x21) for the knee (brain) data. Then, single-coil datasets were generated using a standard coil combination method: $x = \sum_{i=1}^{n_{coils}} S_{i}^{*} X_i$ where $S_{i}$ is the sensitivity map of coil $i$. Next, the coil-combined images were cropped in the image domain to 400x300 (320x230) pixels for the knee (brain) data. This cropping mainly removed background areas. Only the training examples were cropped; we emphasize that the inference data were not cropped, to simulate realistic clinical scans. We used fully convolutional networks, hence inference can be done on images of any size. 

\subsubsection{Sampling implementation}

\textbf{\emph{Bands acquisition}}.
A new band mask was chosen uniformly at random for every training example, to simulate realistic acquisition (Figure \ref{fig3:band_mask}, top row). These masks are applied to the \emph{k}-space data, containing ones inside the chosen band and zeros outside it. We generated those masks only once, before training, and saved them together with the \emph{k}-space data to prevent sampling multiple bands from the same slice in different training iterations.

In the "vanilla" implementation of \emph{k}-band, the band orientation angle was uniformly drawn from integer values in the range of [0,180] degrees. We also conducted experiments showing that the framework can work well with less band orientations. In those cases, we defined $k$  as the number of possible (distinct) band orientations, and uniformly sampled band angles from this set: $[0, \frac{180}{k}, 2\times\frac{180}{k}, \cdots, (k-1)\times\frac{180}{k}]$.

To keep $R_{band}$ constant across all training examples, regardless of the randomly chosen orientation, the band width was varied with the angle, i.e. bands that were closer to a horizontal orientation were slightly wider than bands closer to a vertical orientation.

\textbf{\emph{VD sampling}}. 
The VD masks were generated using SigPy \cite{ong2019sigpy}. To maintain the VD statistics across all the examples in the training and test sets, the VD masks were generated independently of the binary band masks used for training.

\subsubsection{Network implementation}

For the unrolled network, we used the MoDL architecture \cite{aggarwal2018modl}, as implemented in the DeepInPy toolbox \cite{tamir2020deepinpy}. This architecture includes a ResNet consisting of 64 hidden channels and 3x3 kernels. Following the steps described above for data size reduction, our networks included 7 unrolls and 6 ResNet blocks for the brain data, and 10 unrolls and 8 ResNet blocks for the knee data. We tuned the learning rate using a grid search in the range of $1e^{-6}$ to $1e^{-3}$; a learning rate of $1e^{-4}$ was found to be best for all the  methods described in this paper. All of our experiments were done with TITAN Xp, Titan X (Pascal), and NVIDIA GeForce RTX 3090 GPUs. 

\subsection{Implementation details of other methods}

We compared \emph{k}-band with four other methods. The first two are trained using high-resolution data. Those are MoDL \cite{aggarwal2018modl}, which is fully supervised, i.e. trained with fully sampled \emph{k}-space data, and SSDU \cite{yaman2020self_mrm} which is a self-supervised method, trained with undersampled but high-resolution \emph{k}-space data. Note that \emph{k}-space data can be undersampled but still have high resolution, as in SSDU, or be fully sampled but have limited resolution, as in \emph{k}-band. We used the SSDU public GitHub code \cite{yaman2020self_mrm}, and tuned the hyperparameters accordingly. Specifically, the optimal ratio of \emph{k}-space samples used for data consistency vs. loss calculation was found to be $\rho=0.4$, as suggested by the authors. 

We also examined two self-supervised methods trained using only limited-resolution data. One of them, which we term \emph{k}-square, is trained using only low-resolution data acquired in a square around \emph{k}-space center (Figure \ref{fig:fig4_AllMasks}). This method was proposed in the SSDU paper as a benchmark. The other method, dubbed \emph{k}-vertical, 
is trained using data from a \emph{k}-space band that has a fixed vertical orientation in \emph{k}-space (Figure \ref{fig:fig4_AllMasks}). The \emph{k}-square and \emph{k}-vertical methods were both trained in a self-supervised manner, similar to \emph{k}-band: the input to those networks was data from the square/vertical band, undersampled retrospectively using a VD mask, and supervision was done using all the data in the band.

We have taken careful steps to ensure a fair comparison of the above methods. First, all the self-supervised methods examined here (\emph{k}-band, SSDU, \emph{k}-square, and \emph{k}-vertical) were trained using the same amount of data (MoDL is different because it uses full \emph{k}-space data). Secondly, during inference all five methods (MoDL and the four self-supervised methods) received the same type of data, where the VD mask covered the entire \emph{k}-space area (Fig \ref{fig3:band_mask}, middle-right and bottom-right). These methods thus differed only in their training strategies. Third, the five methods were all implemented with the unrolled network architecture described above. Moreover, the hyper-parameters were tuned for each method separately. The methods were evaluated by computing the Normalized Mean Square Error (NMSE) and the Structural Similarity Index Measure (SSIM) \cite{wang2004image} using the magnitude of the reconstructed and ground-truth images.  

\begin{figure}[h]
    \centering
    \includegraphics[width = 0.49\textwidth]{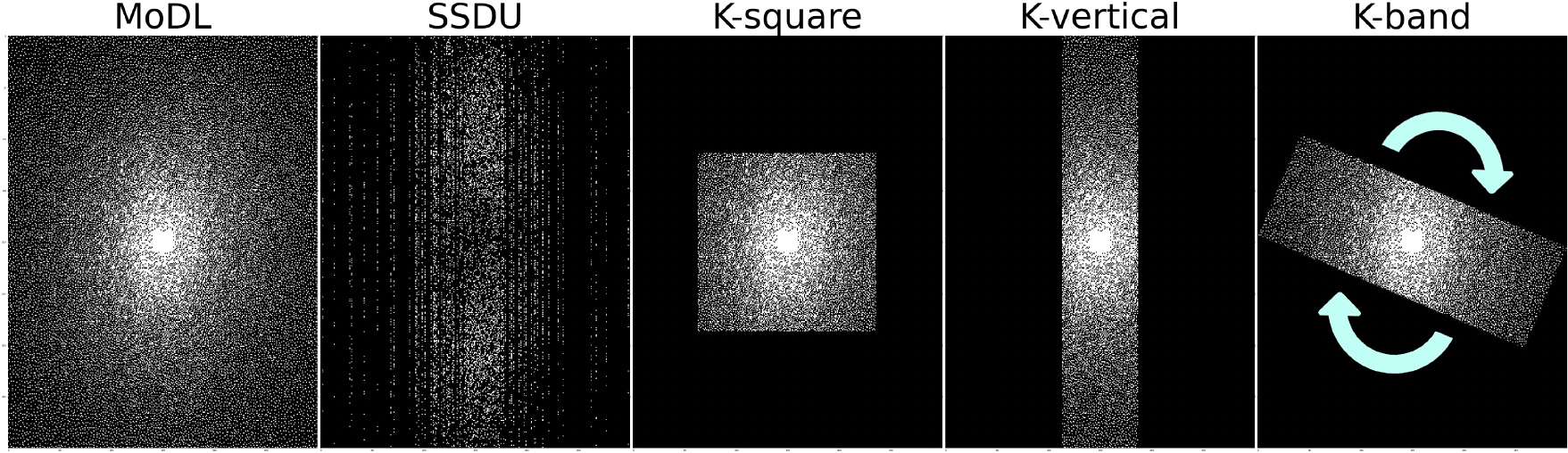}
    \caption{Sampling masks used for input, for the five methods discussed in this work. In these examples $R_{band}=R_{vd}=4$.}
    \label{fig:fig4_AllMasks}
\end{figure}

We investigated different loss functions for both \emph{k}-band and MoDL. For \emph{k}-band, we compared $\ell_1$ and $\ell_2$ \emph{k}-space loss functions, both implemented with the weighting described above, and observed that $\ell_1$ provided better results (Table \ref{table:table1_loss}), in accordance with our theoretical derivation. For MoDL we compared an image-domain $\ell_2$ loss, as proposed originally, and $\ell_1$ \emph{k}-space loss, and observed better performance for the latter as well (Table \ref{table:table1_loss}).  We thus use an $\ell_1$ \emph{k}-space loss for both \emph{k}-band and MoDL. SSDU was trained using the mixed $\ell_1$/$\ell_2$ loss suggested by its authors. Additionally, \emph{k}-square and \emph{k}-vertical were trained using an $\ell_1$ \emph{k}-space loss, for a fair comparison with \emph{k}-band.

\begin{table}[h]
\resizebox{\columnwidth}{!}{%
\begin{tabular}{|l|l|l|l|}
\hline
\textbf{Error Metric}   & \multicolumn{1}{c|}{SSIM} & \multicolumn{1}{c|}{NMSE} \\ \hline
\emph{K}-band (\textbf{\emph{k}-space $\ell_1$ loss}) & \textbf{0.959 ± 0.015}            & \textbf{0.00019 ± 0.00008}              \\ \hline
\emph{K}-band (\emph{k}-space $\ell_2$ loss) & 0.887 ± 0.037             & 0.00050 ± 0.00022            \\ \hline
MoDL (\textbf{\emph{k}-space $\ell_1$ loss})   & \textbf{0.956 ± 0.016}             & \textbf{0.00018 ± 0.00008}             \\ \hline
MoDL (image $\ell_2$ loss)     & 0.933 ± 0.023             & 0.00029 ± 0.00012             \\ \hline
\end{tabular}%
}
\caption{Comparison of loss functions for \emph{k}-band and MoDL with $R_{vd} = R_{band} = 4$.}
\label{table:table1_loss}
\end{table}

\section{Results}

\begin{figure}[h]
    \centering
    \includegraphics[width = 0.50\textwidth]{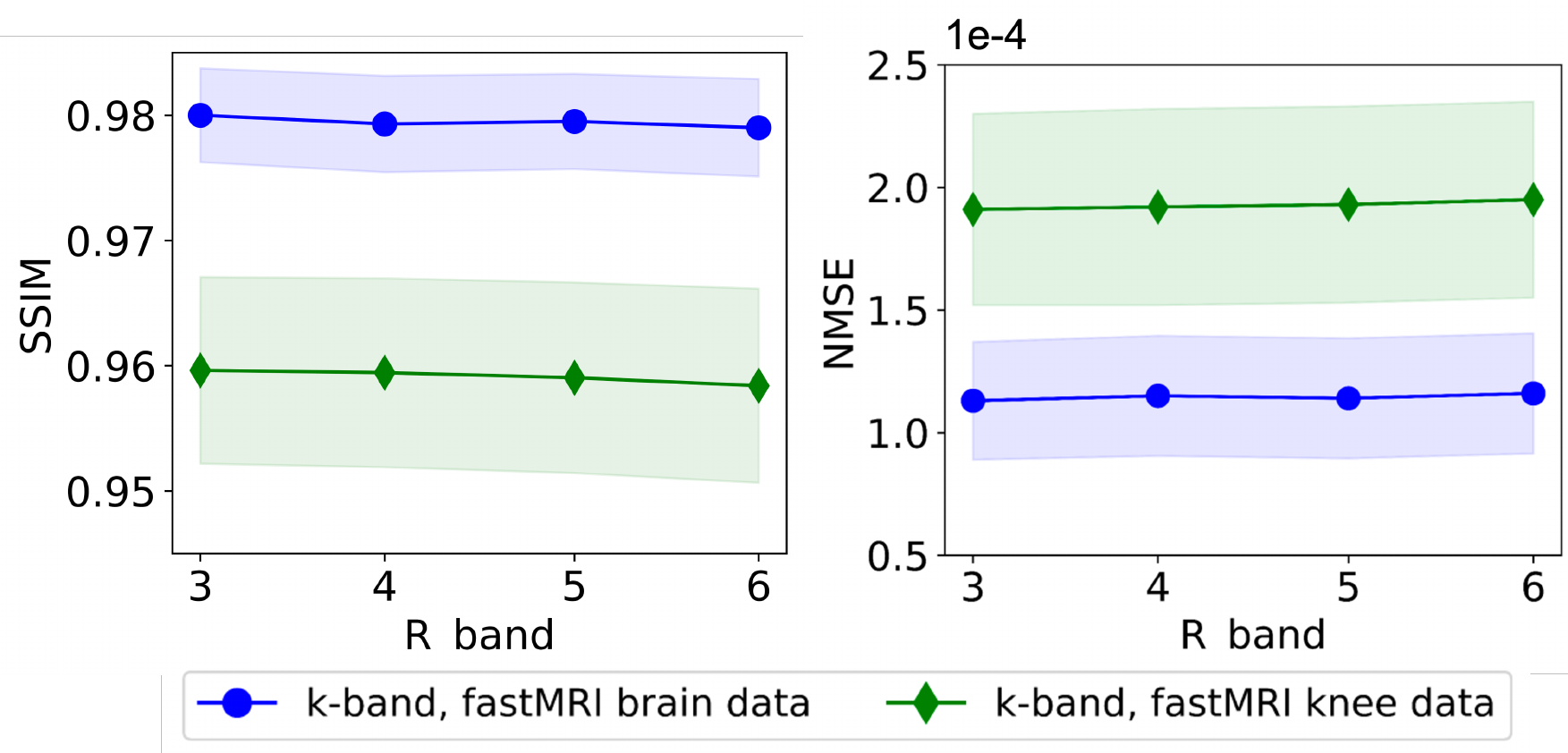}
    \caption{Evaluation of \emph{k}-band for a range of band widths ($R_{band}$) and fixed inference-time acceleration ($R_{vd}=4$) using the fastMRI knee (green) and brain (blue) datasets. Note that \emph{k}-band exhibits highly stable performance across a wide range of band widths for the training data (left-to-right).}
    \label{fig:fig_5_w_stats}
\end{figure}

\subsection{Stability over varying band sizes}

In the first experiment, we demonstrate the robustness of \emph{k}-band to smaller band widths. To do so, \emph{k}-band was trained using identical knee data but four different band widths. Here, $R_{band}$ was varied from 3 to 6; this corresponds to sampling approximately 33\% to 16\% of \emph{k}-space. To examine the reconstruction quality with respect to $R_{band}$ alone, the VD acceleration was held constant at $R_{vd} = 4$. This experiment employed 2D VD undersampling. The results (Figure \ref{fig:fig_5_w_stats}) are that
the SSIM and NMSE curves remained almost flat. This demonstrate that \emph{k}-band exhibits high performance stability over a range of band sizes.

\begin{figure}[h!]
    \centering
    \includegraphics[width = 0.35\textwidth]{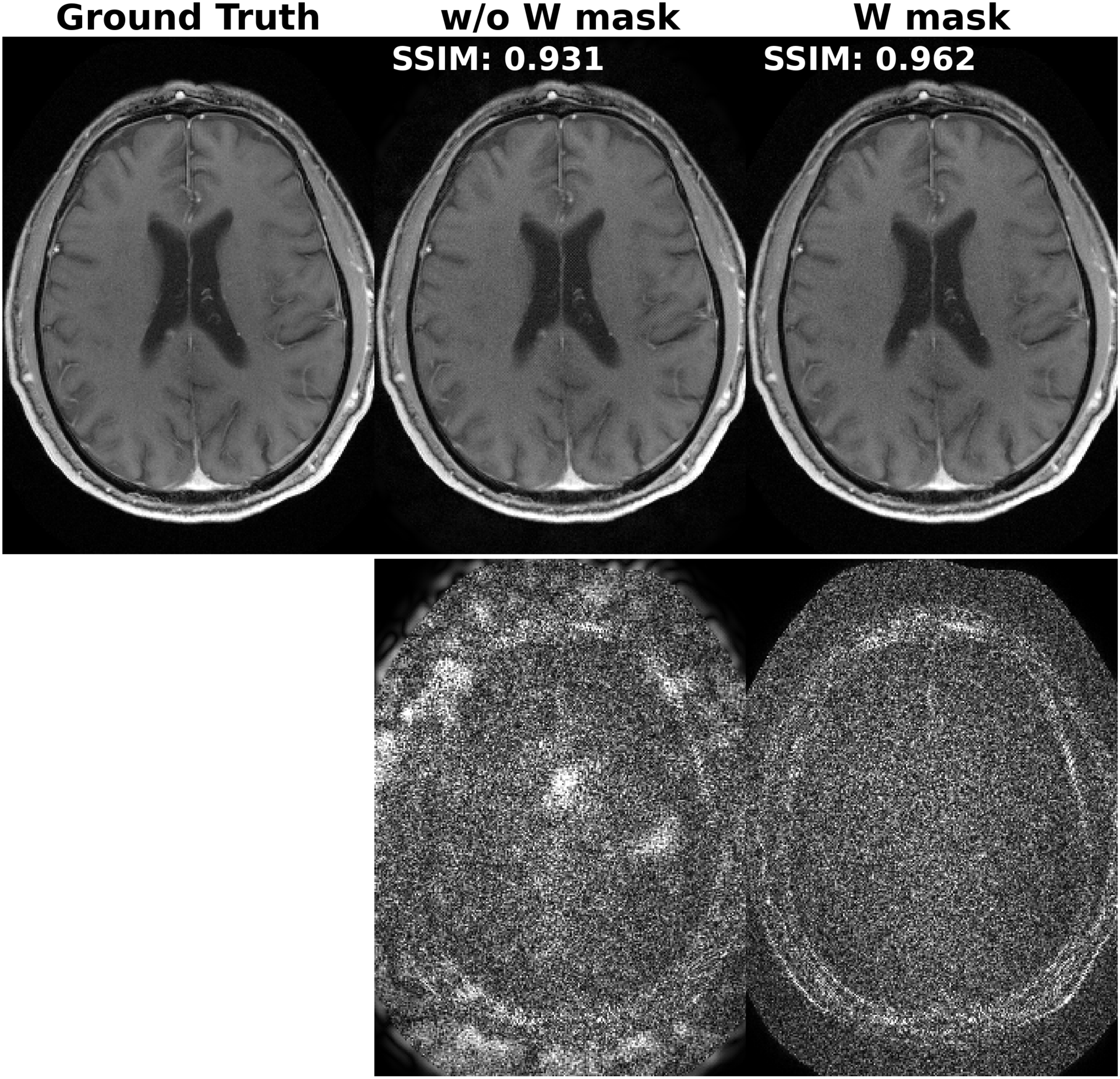}
    \caption{Comparison of training \emph{k}-band with and without the loss-weighting mask. Error maps, magnified by 10-fold, are shown below each reconstruction. Note that using the W mask leads to lower errors.}
    \label{fig:fig6_w_mask_compare}
\end{figure}

\subsection{Loss weighting}

To investigate the effect of the proposed $W$ mask, we trained \emph{k}-band with and without the mask. The experiments were conducted with $R_{band} = 4$ and $R_{vd} = 4$ using the brain data. The mean SSIM values were 0.964 ± 0.013 without the mask, and 0.980 ± 0.008 when the mask was used. A reconstruction example is shown in Figure \ref{fig:fig6_w_mask_compare}. These results suggest that training \emph{k}-band with the $W$ mask leads to overall sharper images.

\subsection{Comparison with other methods}

The five methods were compared by training them with undersampling factors of $R_{vd} = R_{band} = \{6, 8\}$. Experiments were done using the knee and brain data with 2D VD undersampling. The results shown in Figure \ref{fig:fig7_KneeRvdAll} (top, left column) indicate that \emph{k}-band performs substantially better than the two other methods trained on limited-resolution data (\emph{k}-vertical and \emph{k}-square), and comparably to the methods trained on high-resolution data (SSDU and MoDL). Furthermore, the reconstructed images displayed in Figure \ref{fig:fig10_knee_visual_result} (bottom) and \ref{fig:fig8_brain} suggest that \emph{k}-band obtains a visual quality similar to MoDL and SSDU, with the benefit of using only limited-resolution training data.

\begin{figure}[h]
    \centering
    \includegraphics[width = 0.49\textwidth]{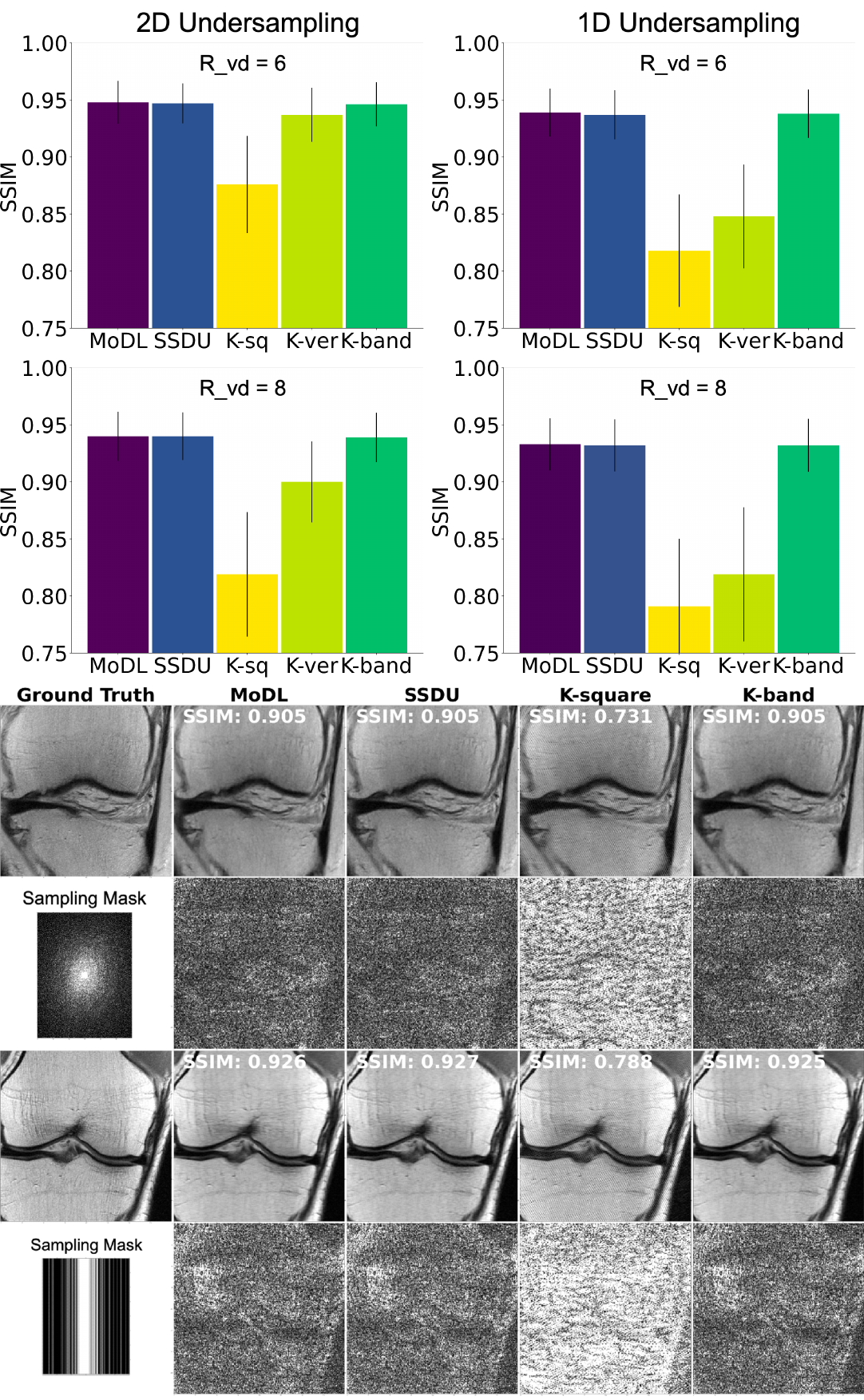}
    \caption{(Top) Comparison using the FastMRI knee data with both 2D and 1D VD undersampling during inference and $R_{vd} = R_{band} = \{6, 8\}$. Note that \emph{k}-band performs comparably to SSDU and MoDL, with the advantage of being trained using only limited-resolution data. (Bottom) Examples for knee images reconstructed using MoDL, SSDU, and \emph{k}-band, all trained with $R_{vd}=8$. For \emph{k}-band, the band width was $R_{band}=8$. Experiments were done with 2D and 1D undersampling.  Error maps, magnified by 10-fold, are shown below each reconstruction. Note that \emph{k}-band performs comparably to the other methods although it was trained using only limited-resolution data.}
    \label{fig:fig7_KneeRvdAll}\label{fig:fig10_knee_visual_result}
\end{figure}

\begin{figure}[h]
    \centering
    \includegraphics[width = 0.49\textwidth]{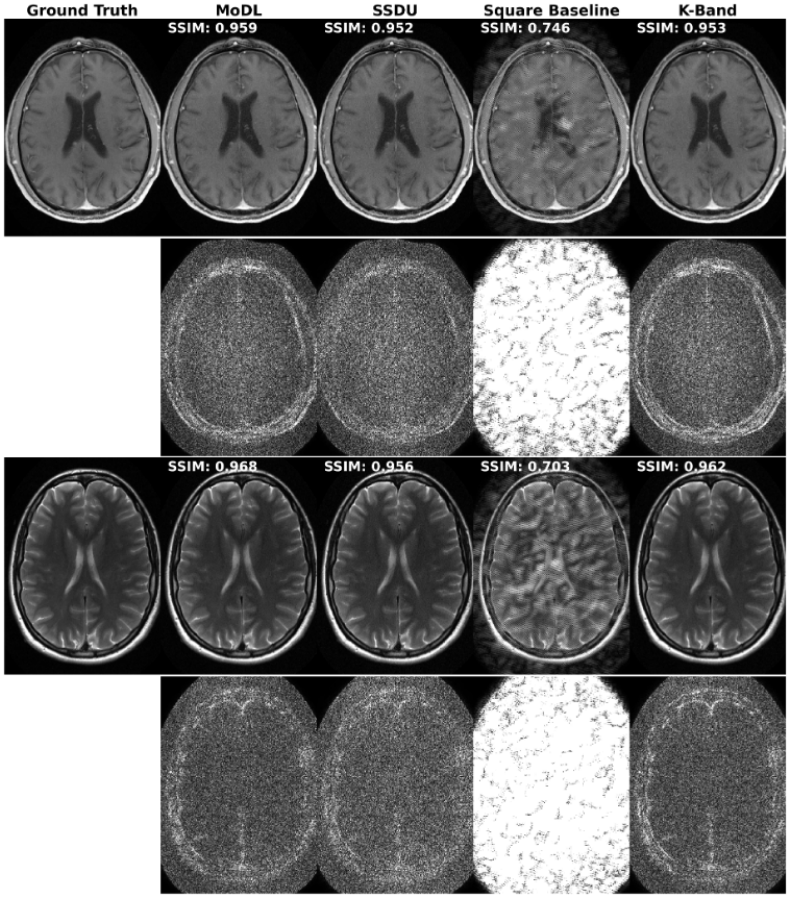}
    \caption{Reconstruction of the brain data with $R_{band} = R_{vd} = 6$. \emph{K}-band achieves better performance than \emph{k}-square and comparable performance to MoDL and SSDU, even though it was trained using only limited-resolution data. Error maps are magnified by 14-fold.}
    \label{fig:fig8_brain}
\end{figure}

We also trained the five methods using 1D VD undersampling masks. Note that 1D undersampling was not explored in the SSDU paper \cite{yaman2020self_mrm}. The statistical results presented in Figure \ref{fig:fig7_KneeRvdAll} (top, right column) indicate that \emph{k}-band performs comparably to MoDL and SSDU in this regime. The reconstructed images shown in Figure \ref{fig:fig10_knee_visual_result} (bottom) also indicate that \emph{k}-band obtains similar performance to MoDL and SSDU.

\subsection{Testing the number of angles $k$}
We investigate the performance of \emph{k}-band when it is trained on fewer band orientations, to test the stability of the framework and to explore a simpler implementation. We train separate networks and report results for the following numbers of band orientations $k$: $[5, 7, 10, 30]$. For each experiment, bands were sampled uniformly from corresponding orientation angles $[0, \frac{180}{k}, 2\times\frac{180}{k}, \cdots, (k-1)\times\frac{180}{k}]$ in order to sample as many distinct \emph{k}-space areas as possible. We also fixed $R_{vd} = 8$ and tested both $R_{band} = 4$ and $R_{band} = 8$ to see how the band area affects the number of bands necessary to match the performance of the vanilla $k=180$ orientations and the fully supervised MoDL.

\begin{figure}[h]
    \centering
    \includegraphics[width = 0.49\textwidth]{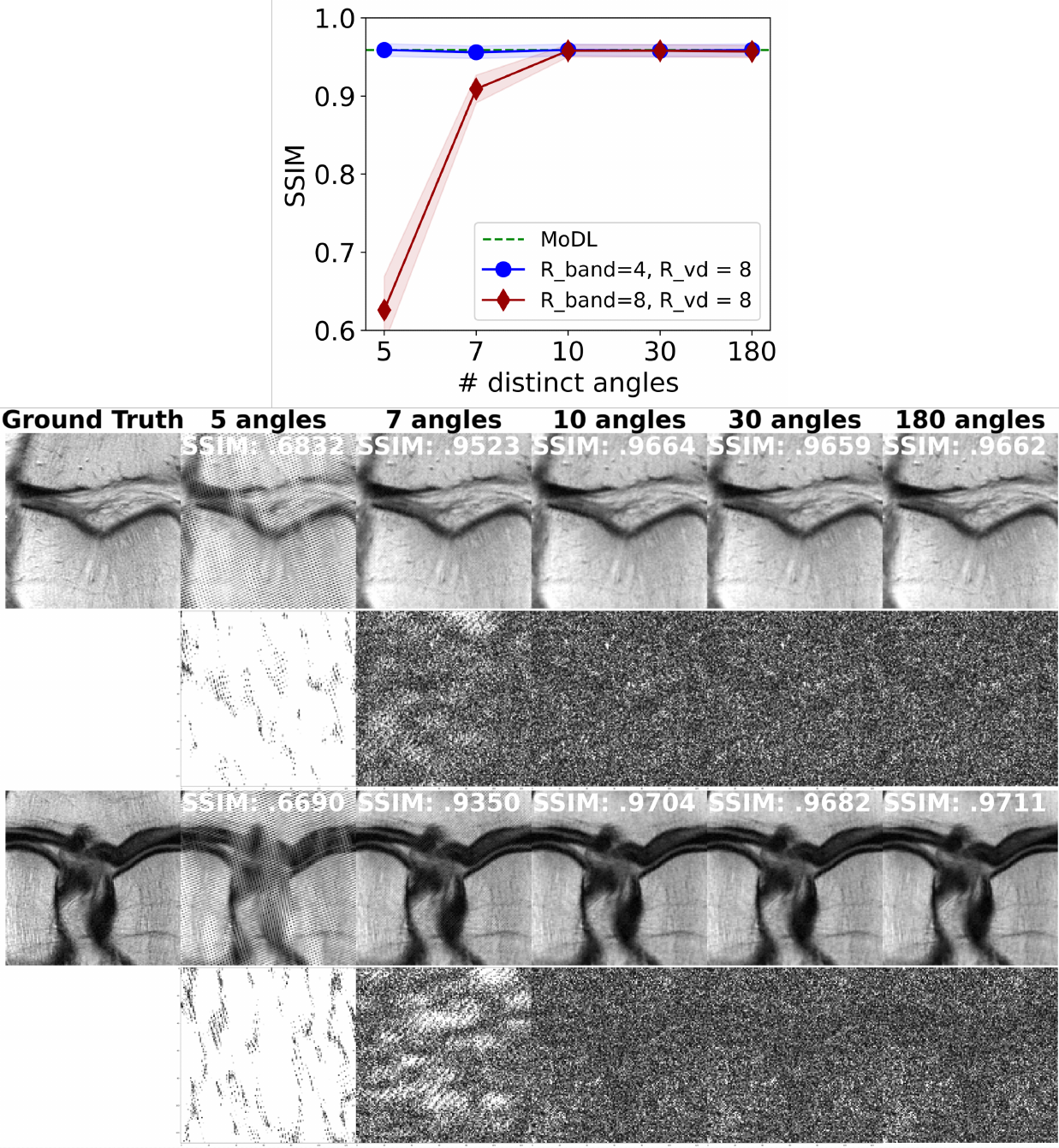}
    \caption{(Top) Comparison using the fastMRI knee data with $R_{vd} = 8$ and $R_{band} = [4,8]$ while varying the number of distinct band orientations for the training data. For $R_{band} = 4$, only 5 band orientations are necessary, while for $R_{band} = 8$ at least 10 band orientations are necessary. (Bottom) Visual comparison for the $R_{band} = 8$ case while varying the number of distinct band orientations $k = 5, 7, 10, 30, 180$. Error maps, magnified by 10-fold, are shown below each reconstruction. After $k = 10$ distinct orientations, we have stable performance.}
    \label{fig:fig10_band_limiting}
\end{figure}

Results are shown in Figure \ref{fig:fig10_band_limiting}. In the case of $R_{band} = 4$, we only need 5 distinct band orientations (namely at angles $[0, 36, 72, 108, 144]$) for high-quality, stable reconstruction, both visually and statistically. However, for the case of $R_{band} = 8$ we need at least 10 distinct band orientations for high-quality reconstruction. These results suggest that it is possible to obtain stable reconstruction when \emph{k}-band is trained on data acquired from fewer than 180 different orientations.

\section{Discussion}

Acquisition of fully sampled data is impractical in high-dimensional (e.g. dynamic/volumetric) MRI. This is a barrier to the development of DL reconstruction methods, which commonly require high-quality training data. To address these challenges, we introduce the \emph{k}-band framework, which involves co-designing the acquisition and training strategies. We propose the concept of \emph{SGD over k-space subsets}, where during training the gradients are computed using only \emph{k}-space bands, which can be acquired efficiently. We derive the method analytically and find two simple conditions under which \emph{k}-band stochastically approximates the desired (but impractical) SGD with fully sampled \emph{k}-space data: the randomization of the band orientation across scans and the use of a deterministic loss-weighting mask. We also demonstrate \emph{k}-band in an extensive set of experiments. Our results indicate that \emph{k}-band obtains performance on-par with fully sampled reconstruction methods and high  stability. Thus, despite being trained using only limited-resolution training data, \emph{k}-band is able to generalize to high-resolution data during inference.

The main merit in this study is the introduction of a framework that enables training networks using only datasets that were acquired with reduced resolution in one dimension. This approach can be particularly advantageous for dynamic MRI, where there is an inherent tradeoff between the spatial and temporal resolutions. Some applications include cardiac \cite{usman2013motion,schlemper2017deep}, dynamic contrast enhanced  \cite{zhang2015fast}, pulmonary \cite{jiang2018motion}, and 4D flow \cite{markl20124d}.  
Moreover, as the scarcity of open-access raw MRI databases is a current bottleneck hindering the development of DL reconstruction techniques, the proposed strategy can pave the way towards easier creation of new databases.

Another important advantage of \emph{k}-band is that its acquisition strategy can be implemented easily with numerous standard Cartesian MRI pulse sequences, which are the workhorses of clinical MRI. Furthermore, \emph{k}-band can be implemented without any special preparations, by simply limiting the resolution along the PE dimension and changing the scan orientation; this is possible on any standard MRI scanner. Moreover, \emph{k}-band is compatible with both 2D and 3D MRI acquisitions.

Our results demonstrate that \emph{k}-band offers high stability across a range of band widths (characterized by 1/$R_{band}$) (Figure \ref{fig:fig_5_w_stats}) and acceleration factors ($R_{vd}$) (Figure \ref{fig:fig7_KneeRvdAll}). Furthermore, the use of the proposed loss-weighting mask improves the performance (Figure \ref{fig:fig6_w_mask_compare}); the loss weighting penalizes reconstruction errors in high-frequency areas more heavily than in the \emph{k}-space center, compensating for the reduced exposure of the network to high-frequency areas during training.

We compared \emph{k}-band with four other methods, all implemented with the same network architecture. Our results indicate that \emph{k}-band performs 
substantially better than two baseline methods trained on limited-resolution data, and comparably to MoDL and SSDU, yielding highly similar results both visually and quantitatively (Figs. \ref{fig:fig7_KneeRvdAll} and  \ref{fig:fig8_brain}). \emph{k}-band hence reduces the need for acquiring high-resolution training data. 

The \emph{k}-band method shares some similarity with the SSDU and SelfCoLearn \cite{zou2022selfcolearn} methods, as all three methods use only undersampled data and self-supervised training. However, both SSDU and SelfCoLearn assume that the undersampling mask covers all of \emph{k}-space area, i.e. that the training data has a high resolution. In contrast, \emph{k}-band assumes that the training data are acquired only within a band, thus enabling efficient data acquisition.

Furthermore, the loss-weighting method introduced here exhibits similarities to the approaches proposed by Aggarwal et al \cite{aggarwal2021ensure} and Millard and Chiew\cite{millard2023theoretical}. The former suggests loss weighting based on Stein’s unbiased estimator for unsupervised training. However Stein’s unbiased estimator \cite{metzler2018unsupervised} makes stronger assumptions than losses based on Noise2Noise \cite{lehtinen2018noise2noise}, as the method considered in this paper. The latter (independently) proposed using \emph{k}-space loss-weighting in order to improve network robustness to the peripheral areas of \emph{k}-space that are infrequently sampled while inhibiting the center. However, their derivation of the loss-weighting used the Noisier2Noise\cite{moran2020noisier2noise} framework in order to analyze the expected value of SSDU reconstruction under a $\ell_2$ loss. Meanwhile, we analyze the moments of the gradients during training for any arbitrary stochastic sub-sampling strategy (extending to both bands sampling and SSDU variable-density sampling) and we also analytically show that an $\ell_1$ \emph{k}-space loss outperforms an $\ell_2$ \emph{k}-space loss for such stochastic acquisition.

Finally, \emph{k}-band is different from super-resolution \cite{van2012super, qiu2023medical}. Super-resolution methods pass in limited-resolution data as input during \emph{inference}, in order recover an image of higher resolution in the image domain. On the other hand, \emph{k}-band proposes to pass in limited-resolution \emph{k}-space data as input during \emph{training}, to enable easier acquisition of training data. Furthermore, during \emph{inference}, \emph{k}-band is not limited to reconstructing limited-resolution data, unlike super-resolution methods.

This work also has some limitations. The proposed method was validated using retrospective simulations with the fastMRI database \cite{knoll2020fastmri}. Furthermore, although our code is suitable for multi-coil data, our networks were trained on coil-combined data due to computational constraints. Future work will include validation of the method on prospectively acquired multi-coil data. 

\section{Conclusion}

We introduce \emph{k}-band, an acquisition-reconstruction framework that enables training self-supervised DL models using only limited-resolution data, with test-time generalization to high-resolution data. This work introduces the framework of \emph{SGD over \emph{k}-space subsets}, which enables rapid and easy data acquisition using any commercial MRI scanner, and offers a practical solution for training DL models in data-challenging regimes. Numerical experiments demonstrate that \emph{k}-band reconstructs images with quality competitive with those of supervised and self-supervised techniques trained on high-resolution data, thereby reducing the need for acquisition of such data. This work can hence pave the way towards development of new DL techniques for high-dimensional MRI. 

\section*{Acknowledgments}
The authors would like to thank the National Institute of Biomedical Imaging and Bioengineering (grant numbers U24EB029240, U01EB029427, R01EB009690), GE HealthCare, and the Weizmann Institute Women’s Postdoctoral Career Development Award in Science for financial support.

\section*{Data Availability Statement}
Our code is available at 
\url{https://github.com/mikgroup/K-band}.

\bibliography{MRM-AMA}%

\begin{thebibliography}{10}

\bibitem{pruessmann1999sense}
Pruessmann KP, Weiger M, Scheidegger MB, Boesiger P. {SENSE: sensitivity encoding for fast MRI}.  {\it Magnetic Resonance in Medicine. }1999;42(5):952--962.

\bibitem{Deshmane2015parallel}
Deshmane A, Gulani V, Griswold MA, Seiberlich N. Parallel MR imaging.  {\it Journal of Magnetic Resonance Imaging. }2012;36(1):55--72.

\bibitem{griswold2002generalized}
Griswold MA, Jakob PM, Heidemann RM, et al. Generalized autocalibrating partially parallel acquisitions (GRAPPA).  {\it Magnetic Resonance in Medicine. }2002;47(6):1202--1210.

\bibitem{lustig2010spirit}
Lustig M, Pauly JM. {SPIRiT: iterative self-consistent parallel imaging reconstruction from arbitrary k-space}.  {\it Magnetic Resonance in Medicine. }2010;64(2):457--471.

\bibitem{shimron2019core}
Shimron E, Webb AG, Azhari H. {CORE-PI: Non-iterative convolution-based reconstruction for parallel MRI in the wavelet domain}.  {\it Medical Physics. }2019;46(1):199--214.

\bibitem{lustig2007sparse}
Lustig M, Donoho D, Pauly JM. {Sparse MRI: The application of compressed sensing for rapid MR imaging}.  {\it Magnetic Resonance in Medicine. }2007;58(6):1182--1195.

\bibitem{vasanawala2011practical}
Vasanawala SS, Murphy M, Alley MT, et al. Practical parallel imaging compressed sensing MRI: Summary of two years of experience in accelerating body MRI of pediatric patients.  In: 2011 ieee international symposium on biomedical imaging: From nano to macro:1039--1043IEEE; 2011.

\bibitem{feng2016xd}
Feng L, Axel L, Chandarana H, Block KT, Sodickson DK, Otazo R. {XD-GRASP: golden-angle radial MRI with reconstruction of extra motion-state dimensions using compressed sensing}.  {\it Magnetic Resonance in Medicine. }2016;75(2):775--788.

\bibitem{feng2017compressed}
Feng L, Benkert T, Block KT, Sodickson DK, Otazo R, Chandarana H. Compressed sensing for body {MRI}.  {\it Journal of Magnetic Resonance Imaging. }2017;45(4):966--987.

\bibitem{shimron2020temporal}
Shimron E, Grissom W, Azhari H. {Temporal differences (TED) compressed sensing: a method for fast MRgHIFU temperature imaging}.  {\it NMR in Biomedicine. }2020;33(9):e4352.

\bibitem{lundervold2019overview}
Lundervold AS, Lundervold A. {An overview of deep learning in medical imaging focusing on MRI}.  {\it Zeitschrift f{\"u}r Medizinische Physik. }2019;29(2):102--127.

\bibitem{hammernik2022physics}
Hammernik K, K{\"u}stner T, Yaman B, et al. Physics-Driven Deep Learning for Computational Magnetic Resonance Imaging: Combining physics and machine learning for improved medical imaging.  {\it IEEE Signal Processing Magazine. }2023;40(1):98--114.

\bibitem{bell2023sharing}
Bell LC, Shimron E. Sharing data is essential for the future of AI in medical imaging.  {\it Radiology: Artificial Intelligence. }2023;6(1):e230337.

\bibitem{knoll2020fastmri}
Knoll F, Zbontar J, Sriram A, et al. fastMRI: A publicly available raw k-space and DICOM dataset of knee images for accelerated MR image reconstruction using machine learning.  {\it Radiology: Artificial Intelligence. }2020;2(1):e190007.

\bibitem{ong2018mridata}
Ong F, Amin S, Vasanawala S, Lustig M. {Mridata.org: An open archive for sharing MRI raw data}.  In:  Proc. Intl. Soc. Mag. Reson. Med, vol. 26: :1; 2018.

\bibitem{calgary}
{Calgary Campinas Public Dataset}  \url{sites.google.com/view/calgary-campinas-dataset/mr-reconstruction-challenge}Accessed: 2021-03-22; 2020.

\bibitem{desai2021skm}
Desai AD, Schmidt AM, Rubin EB, et al. {SKM-TEA: A Dataset for Accelerated MRI Reconstruction with Dense Image Labels for Quantitative Clinical Evaluation}.  In: Thirty-fifth Conference on Neural Information Processing Systems Datasets and Benchmarks Track; 2021.

\bibitem{shimron2022implicit}
Shimron E, Tamir JI, Wang K, Lustig M. {Implicit data crimes: Machine learning bias arising from misuse of public data}.  {\it Proceedings of the National Academy of Sciences. }2022;119(13):e2117203119.

\bibitem{jiang2018motion}
Jiang W, Ong F, Johnson KM, et al. Motion robust high resolution 3D free-breathing pulmonary MRI using dynamic 3D image self-navigator.  {\it Magnetic resonance in medicine. }2018;79(6):2954--2967.

\bibitem{markl20124d}
Markl M, Frydrychowicz A, Kozerke S, Hope M, Wieben O. {4D flow MRI}.  {\it Journal of Magnetic Resonance Imaging. }2012;36(5):1015--1036.

\bibitem{zhang2015fast}
Zhang T, Cheng JY, Potnick AG, et al. Fast pediatric 3D free-breathing abdominal dynamic contrast enhanced MRI with high spatiotemporal resolution.  {\it Journal of Magnetic Resonance Imaging. }2015;41(2):460--473.

\bibitem{ong2020extreme}
Ong F, Zhu X, Cheng JY, et al. Extreme MRI: Large-scale volumetric dynamic imaging from continuous non-gated acquisitions.  {\it Magnetic resonance in medicine. }2020;84(4):1763--1780.

\bibitem{nayak2022real}
Nayak KS, Lim Y, Campbell-Washburn AE, Steeden J. {Real-Time Magnetic Resonance Imaging}.  {\it Journal of Magnetic Resonance Imaging. }2022;55(1):81--99.

\bibitem{akccakaya2022unsupervised}
Ak{\c{c}}akaya M, Yaman B, Chung H, Ye~JC. {Unsupervised Deep Learning Methods for Biological Image Reconstruction and Enhancement: An overview from a signal processing perspective}.  {\it IEEE Signal Processing Magazine. }2022;39(2):28--44.

\bibitem{chen2022ai}
Chen Y, Sch{\"o}nlieb CB, Lio P, et al. {AI-based reconstruction for fast MRI—a systematic review and meta-analysis}.  {\it Proceedings of the IEEE. }2022;110(2):224--245.

\bibitem{lehtinen2018noise2noise}
Lehtinen J, Munkberg J, Hasselgren J, et al. Noise2Noise: Learning image restoration without clean data.  {\it arXiv preprint arXiv:1803.04189. }2018;.

\bibitem{yaman2020self_mrm}
Yaman B, Hosseini SAH, Moeller S, Ellermann J, Ugurbil K, Akcakaya M. {{Self-Supervised Learning of Physics-Guided Reconstruction Neural Networks without Fully-Sampled Reference Data}}.  {\it Magnetic Resonance in Medicine. }2020;84(6):3172--3191.

\bibitem{millard2023theoretical}
Millard C, Chiew M. A theoretical framework for self-supervised MR image reconstruction using sub-sampling via variable density Noisier2Noise.  {\it IEEE transactions on computational imaging. }2023;.

\bibitem{korkmaz2022unsupervised}
Korkmaz Y, Dar SU, Yurt M, {\"O}zbey M, Cukur T. {Unsupervised MRI reconstruction via zero-shot learned adversarial transformers}.  {\it IEEE Transactions on Medical Imaging. }2022;.

\bibitem{yaman2021zero}
Yaman B, Hosseini SAH, Ak{\c{c}}akaya M. {Zero-shot self-supervised learning for MRI reconstruction}.  {\it International Conference on Learning Representations. }2022;.

\bibitem{daras2024ambient}
Daras G, Shah K, Dagan Y, Gollakota A, Dimakis A, Klivans A. Ambient diffusion: Learning clean distributions from corrupted data.  {\it Advances in Neural Information Processing Systems. }2024;36.

\bibitem{sedona_kband}
Wang F, Qi~H, De~Goyeneche A, Lustig M, Shimron E. {K-band: Training self-supervised reconstruction networks using limited-resolution data}.  In: ISMRM Workshop on Data Sampling and Image Reconstruction; 2023.

\bibitem{ismrm_kband}
Qi~H, Wang F, De~Goyeneche A, Lustig M, Shimron E. {K-band: A self-supervised strategy for training deep-learning MRI reconstruction networks using only limited-resolution data}.  In:  Proc. Intl. Soc. Mag. Reson. Med, vol. 31: :1; 2023.

\bibitem{Klug_Atik_Heckel_2023}
Klug T, Atik D, Heckel R. Analyzing the Sample Complexity of Self-Supervised Image Reconstruction Methods. 2023;.
\newblock arXiv:2305.19079 [cs, eess].

\bibitem{aggarwal2018modl}
Aggarwal HK, Mani MP, Jacob M. {MoDL: Model-based deep learning architecture for inverse problems}.  {\it IEEE Transactions on Medical Imaging. }2018;38(2):394--405.

\bibitem{uecker2014espirit}
Uecker M, Lai P, Murphy MJ, et al. {ESPIRiT}—an eigenvalue approach to autocalibrating parallel {MRI}: where {SENSE} meets {GRAPPA}.  {\it Magnetic Resonance in Medicine. }2014;71(3):990--1001.

\bibitem{uecker2015berkeley}
Uecker M, Ong F, Tamir JI, et al. Berkeley advanced reconstruction toolbox.  In:  Proc. Intl. Soc. Mag. Reson. Med, vol. 23: ; 2015.

\bibitem{ong2019sigpy}
Ong F, Lustig M. {SigPy: a python package for high performance iterative reconstruction}.  In:  Proc. Intl. Soc. Mag. Reson. Med, vol. 4819: ; 2019.

\bibitem{tamir2020deepinpy}
Tamir JI, Yu~SX, Lustig M. {DeepInPy: Deep Inverse Problems in Python}.  In: ISMRM Workshop on Data Sampling and Image Reconstruction; 2020.

\bibitem{wang2004image}
Wang Z, Bovik AC, Sheikh HR, Simoncelli EP. {Image quality assessment: from error visibility to structural similarity}.  {\it IEEE Transactions on Image Processing. }2004;13(4):600--612.

\bibitem{usman2013motion}
Usman M, Atkinson D, Odille F, et al. {Motion corrected compressed sensing for free-breathing dynamic cardiac MRI}.  {\it Magnetic Resonance in Medicine. }2013;70(2):504--516.

\bibitem{schlemper2017deep}
Schlemper J, Caballero J, Hajnal JV, Price A, Rueckert D. {A deep cascade of convolutional neural networks for MR image reconstruction}.  In: International Conference on Information Processing in Medical Imaging:647--658Springer; 2017.

\bibitem{zou2022selfcolearn}
Zou J, Li~C, Jia S, et al. {SelfCoLearn: Self-supervised collaborative learning for accelerating dynamic MR imaging}.  {\it Bioengineering. }2022;9(11):650.

\bibitem{aggarwal2021ensure}
Aggarwal HK, Pramanik A, Jacob M. Ensure: Ensemble Stein’s unbiased risk estimator for unsupervised learning.  In: ICASSP 2021-2021 IEEE International Conference on Acoustics, Speech and Signal Processing (ICASSP):1160--1164; 2021.

\bibitem{metzler2018unsupervised}
Metzler CA, Mousavi A, Heckel R, Baraniuk RG. Unsupervised learning with Stein's unbiased risk estimator.  {\it arXiv preprint arXiv:1805.10531. }2018;.

\bibitem{moran2020noisier2noise}
Moran N, Schmidt D, Zhong Y, Coady P. Noisier2noise: Learning to denoise from unpaired noisy data.  In: Proceedings of the IEEE/CVF Conference on Computer Vision and Pattern Recognition:12064--12072; 2020.

\bibitem{van2012super}
Van~Reeth E, Tham IW, Tan CH, Poh CL. Super-resolution in magnetic resonance imaging: a review.  {\it Concepts in Magnetic Resonance Part A. }2012;40(6):306--325.

\bibitem{qiu2023medical}
Qiu D, Cheng Y, Wang X. Medical image super-resolution reconstruction algorithms based on deep learning: A survey.  {\it Computer Methods and Programs in Biomedicine. }2023;238:107590.

\end{thebibliography}
\end{document}